\newtheorem*{thm*}{Theorem}
\DeclareMathOperator{\Tr}{Tr}
\begin{document}

\title{Classical simulation of quantum dephasing and depolarizing noise}
\author{Daniel Crow}
\email{decrow@wisc.edu}
\author{Robert Joynt}
\email{rjjoynt@wisc.edu}
\affiliation{Physics Department, University of Wisconsin-Madison, Madison, WI, US\\
}
\date{\today}

\begin{abstract}
Dephasing decoherence induced by interaction of one qubit with a quantum
bath can be simulated classically by random unitary evolution without the
need for a bath and this random unitary evolution is equivalent to the
quantum case. \ For a general dephasing model and a single qubit system, we
explicitly construct the noise functional and completely specify the random
unitary evolution. \ To demonstrate the technique, we applied our results to
three paradigmatic models: spin-boson, central spin, and quantum impurity. \
For multiple qubits, we identify a class of generalized quantum dephasing
models that can be simulated classically. Finally we show that depolarizing
quantum models can be simulated classically for all dimensionalities of the
principal system.
\end{abstract}

\pacs{03.65.Yz, 03.67.Lx,  42.50.Lc}
\maketitle

\section{Introduction}

The notion of an open quantum system has received intense scrutiny in recent
years, motivated initially by questions about the location and nature of the
boundary (if any) between the quantum and classical worlds \cite%
{caldeiraleggett}.\ Further impetus has been given by the desire to minimize
decoherence in quantum information processing \cite{nielsenchuang, weiss}. \
Open quantum systems have been conceived in two quite distinct ways: a
system and bath considered as a single quantum entity with overall unitary
dynamics, or a quantum system acted on by random classical forces. \ The
first (system-bath or SB) approach is usually thought of as being more
general since it includes the transformation of information and the idea
that decoherence is connected to the entanglement of a system with its
environment \cite{ekert}. \ In the random classical (RC) force approach
these effects are absent. \ Furthermore, in the question of the
quantum-classical divide, it is often seen as important that the bath be
macroscopic since this precludes the complete measurement of the bath
degrees of freedom.

It is very clear from the Kraus operator representation that the SB approach
is more general than the RC one, since the operators need not be unitary in
SB. \ However, this does not illuminate what precisely is the role of
entanglement in separating the SB and RC methods. \ In fact, it has been
pointed out that SB models whose coupling is of dephasing type should always
be able to be rewritten as classical models \cite{strunz}. \ This is based
on general results concerning existence of certain positive maps \cite%
{Landau}. \ A specific example of a single qubit bath has been given \cite%
{strunz2}. \ Since SB entanglement is present in many dephasing models, this
implies that this entanglement is not the decisive feature of SB models that
distinguishes them from RC models.

In order to understand this particular aspect of the difference between
quantum theory and classical physics, we seek to find situations in which SB
models can be classically simulated. \ There are two aspects to this
program. \ The first is existence: which SB models can be classically
simulated? \ The second is construction: can we write down explicitly the RC
model that corresponds to a given SB model? \ The possibility of such
simulation of course does not imply that system-bath entanglement is not
important for quantum decoherence, but it does imply that the distinction
between quantum and classical noise is more subtle than has been previously
recognized.

The existence question for one qubit has been essentially settled in Refs.  
\cite{strunz} and  \cite{strunz2}. We extend this work by giving an explicit
construction method. \ As examples, we show explicitly how three
paradigmatic system-macroscopic dephasing bath models can be simulated by
subjecting the quantum 2-level system to a classical random force without
ever needing the bath. \ We do this by describing completely the classical
noise source that perfectly mimics the effect of entanglement of the qubit
with the bath (Sec. \ref{sec: models}). \ \ The three models are the popular spin-boson
model, a specific central spin model, and the quantum impurity model (Sec.
\ref{sec: single}). \ The last model is particularly interesting in this 
context since it
has a ``classical" and a ``quantum" phase 
\cite{lerner}. For multiple qubits, we can also give a construction method
for a certain class of multiple-qubit dephasing models, extending both
existence and construction results. \ \ The qubits can be interacting and
entangled, so the generalization is non-trivial (Sec. \ref{sec: multi}). \ Finally we show
that pure depolarizing models on qudits of arbitrary dimension can always be
simulated classically (Sec. \ref{sec: depol}).

\section{Classical Simulation of Quantum Dephasing Models}
\label{sec: models}

\subsection{Quantum Models}

The class of quantum models considered in this section consists of a single
qubit principal system subject to dephasing by an arbitrary quantum system.
\ Thus the total Hamiltonian of the composite system is 
\begin{equation}
H=H_{S}\left[ \sigma _{i}\right] +H_{B}\left[ \lambda _{i}\right] +H_{SB}%
\left[ \sigma _{i},\lambda _{i}\right]   \label{eq:sdef}
\end{equation}%
where $\sigma _{0,x,y,z}$ are the identity matrix $I$ and the Pauli matrices
that act on the qubit and the matrices $\lambda _{j}$ are some complete set
of Hermitian operators for the bath, taken for simplicity to be
finite-dimensional. \ $H_{B}\left[ \lambda _{i}\right] +H_{SB}\left[ \sigma
_{i},\lambda _{i}\right] $ are linear functions of the $\lambda _{i}.$ \ $%
\lambda _{0}$ is proportional to the identity, and $\Tr_{B}$ $\lambda _{i}=0$
for $i>0.$ \ For an $N$-level bath the $\lambda _{i}$ for $i>0$ could be
chosen as proportional to the $N^{2}-1$ generators of $SU\left( N\right) $
and $\lambda _{0}=I/N.$ \ $\Tr_{B}$ is the trace over bath variables. \ We
shall normalize the $\lambda _{i}$ by the condition that $\Tr_{B}\left[
\lambda _{i}\lambda _{j}\right] =\delta _{ij}$ for $i,j>0.$ \ $H_{B}\left[
\lambda _{i}\right] $ is the bath Hamiltonian. \ Infinite-dimensional baths
can be treated by the same method at the cost of some additional notational
complexity. \ 

We will focus on dephasing models, defined as those for which $\left[
H_{S},H_{SB}\right] =0.$

The total density matrix can be expanded as \ 
\begin{equation}
\rho =\sum_{ij}N_{ij}\sigma _{i}\otimes \lambda _{j}
\end{equation}%
and we can invert this equation : 
\begin{equation*}
N_{ij}=\frac{1}{2}\Tr\left[ \left( \sigma _{i}\otimes \lambda _{j}\right)
\rho \right] .
\end{equation*}%
The reduced density matrix of the system is%
\begin{equation*}
\rho ^{S}=\Tr_{B}\rho =\frac{1}{2}\sum_{i}n_{i}\sigma _{i}
\end{equation*}%
and $n_{i}=\left( 1,n_{x},n_{y},n_{z}\right) $ is the expanded Bloch vector
of the system. For quantum noise we have%
\begin{equation*}
\rho \left( t\right) =U\rho \left( 0\right) U^{\dag }
\end{equation*}%
with $U=\exp \left( -iHt\right)$ where we have set $\hbar = 1$. \ In terms
of components this is%
\begin{equation*}
N_{ij}\left( t\right) =\frac{1}{2}\sum_{kl}\Tr\left[ \left( \sigma
_{i}\otimes \lambda _{j}\right) U\left( \sigma _{k}\otimes \lambda
_{l}\right) U^{\dag }\right] N_{kl}\left( 0\right) .
\end{equation*}

We assume that the initial $\rho $ is in a product form $\rho \left(
0\right) =\rho ^{S}\left( 0\right) \otimes \rho ^{B}\left( 0\right) ,$ which
in components says that $N_{ij}\left( 0\right) =\frac{1}{2}n_{i}\left(
0\right) m_{j},$ where 
\begin{equation*}
m_{j}=\Tr_{B}\lambda _{i}\rho ^{B}\left( 0\right) .
\end{equation*}%
This is an important assumption, which will be used throughout the rest of
the paper. \ It is an open question whether the results below can be
generalized to non-product initial conditions. \ The dynamics of the system
can be rewritten as 
\begin{equation*}
N_{ij}\left( t\right) =\frac{1}{2}\sum_{kl}\Tr\left[ \left( \sigma
_{i}\otimes \lambda _{j}\right) U\left( \sigma _{k}\otimes \lambda
_{l}\left( 0\right) \right) U^{\dag }\right] n_{k}\left( 0\right) m_{l}
\end{equation*}%
and the Bloch vector of the qubit at time $t$ is given by 
\begin{align*}
n_{i}\left( t\right) & =2N_{i0}\left( t\right) \\
& =\sum_{kl}\Tr\left[ \left( \sigma _{i}\otimes \lambda _{0}\right) U\left(
\sigma _{k}\otimes \lambda _{l} \right) U^{\dag }\right] n_{k}\left(
0\right) m_{l}
\end{align*}%
Finally we have a linear relation between the initial and final expanded
Bloch vectors:%
\begin{equation*}
n_{i}\left( t\right) =\sum_{k}T_{ik}^{\left( Q\right) }n_{k}\left( 0\right)
\end{equation*}%
where the quantum transfer matrix is given by%
\begin{equation}
T_{ik}^{\left( Q\right) }\left( t\right) =\sum_{l}\Tr\left[ \left( \sigma
_{i}\otimes \lambda _{0}\right) U\left( \sigma _{k}\otimes \lambda
_{l}\right) U^{\dag }\right] m_{l}.  \label{eq:qtransfer}
\end{equation}%
One should note that this represents a linear relation between the initial
and final \emph{expanded} Bloch vectors. \ However, if construed as a
relation between the usual 3-dimensional Bloch vectors, it is an affine
relation, since if $n_{i}\left( t=0\right) =\left( 1,0,0,0\right) ,$ then
the initial state has zero ordinary Bloch vector, but the final state has $%
n_{i}\left( t\right) =T_{i0}^{\left( Q\right) },$ which is not zero. \
Expanding the Bloch vector to four components is just a convenient way of
including the affine part. \ 

The whole dynamics of the qubit is contained in the matrix $T_{ik}^{\left(
Q\right) }\left( t\right) .$ \ Given a Hamiltonian for the coupled system
and bath, $V$, the matrix $T_{ik}^{\left( Q\right) }\left( t\right) $ is
uniquely defined by Eq. \ref{eq:qtransfer}. \ More formally, let us define
the set $QM$ of quantum models specified by $H$\ and $U.$ \ Let $qm\in QM$
be a member of this set. \ Let $RS$ be the set of all evolutions of the
density matrix of the principal system and let $rs\in RS$ be a member of
this set. \ Then Eq. \ref{eq:qtransfer} defines a map $f:QM\rightarrow RS.$

\subsection{Classical Noise}

A\ 2-level system acted on by dephasing classical noise has the $2\times 2$
Hamiltonian%
\begin{equation}
H_{Cl}=-\frac{1}{2}B\sigma _{z}+\frac{1}{2}h\left( t\right) \sigma _{z}.
\label{eq:class}
\end{equation}%
$h\left( t\right) $ is a random classical (c-number) function of time that
represents an external source of noise. \ To completely specify the model we
need a probability functional $\mathcal{P}\left[ h\right] $ on the noise
histories. \ Then the system dynamics is given by%
\begin{equation}
\begin{aligned} \rho ^{\left( Cl\right) }\left( t\right) =& \int
\mathcal{D}\left[ h\left( t^{\prime }\right) \right] \mathcal{P}\left[
h\left( t^{\prime }\right) \right] \\ &\times U_{Cl}\left[ h\left( t^{\prime
}\right) \right] \rho ^{\left( Cl\right) }\left( 0\right) U_{Cl}^{\dag
}\left[ h\left( t^{\prime }\right) \right] . \label{eq:clev} \end{aligned}
\end{equation}%
This is a functional integral over all real functions $h\left( t^{\prime
}\right) $ defined on the interval $0\leq t^{\prime }\leq t.$ \ Each $%
h\left( t^{\prime }\right) $ is assigned a probability $\mathcal{P}\left[
h\left( t^{\prime }\right) \right] $ and 
\begin{equation*}
\int \mathcal{D}\left[ h\left( t^{\prime }\right) \right] \mathcal{P}\left[
h\left( t^{\prime }\right) \right] =1.
\end{equation*}%
\ Here the time-ordered exponential $U_{Cl}\left[ h\left( t\right) \right] =%
\mathcal{T}\exp \left( -i\int_{0}^{t}H_{Cl}\left( t^{\prime }\right)
~dt^{\prime }\right) $ is a $2\times 2$ matrix, unlike $U_{Q}\left( t\right)
,$ which is infinite-dimensional. In the dephasing case two Hamiltonians
taken at different times commute: $\left[ H_{Cl}\left( t\right)
,H_{Cl}\left( t^{\prime }\right) \right] =0$ and the time-ordering can be
dropped. \ Note the simplicity of the classical problem relative to the
quantum one. \ \ 

We now follow the steps of the discussion of the quantum models with the
appropriate modifications. Restating the results in terms of the $3\times 3$
transfer matrix $T^{\left( Cl\right) },$ we have, in exact analogy to the
section \ above,%
\begin{align*}
\rho ^{S}& =\frac{1}{2}\sum_{i=0}^{3}n_{i}\sigma _{i}, \\
n_{i}& =\frac{1}{2}\Tr\sigma _{i}\rho ^{S} \\
n_{i}\left( t\right) & =\sum_{j=0}^{3}T_{ij}^{\left( Cl\right) }\left(
t\right) n_{j}\left( 0\right)
\end{align*}%
where 
\begin{equation}
\begin{aligned} T_{ij}^{\left( Cl\right) }\left( t\right) =& \frac{1}{2}
\int \mathcal{D}\left[ h\left( t^{\prime }\right) \right] \mathcal{P}\left[
h\left( t^{\prime }\right) \right] \times \\ & \Tr\left\{
\sigma_{i}U_{Cl}[h\left( t^{\prime }\right) ]\sigma_{j} U_{Cl}^{\dag
}[h\left( t^{\prime }\right) ]\right\} \end{aligned}  \label{eq:ticl}
\end{equation}%
Note that $T_{i0}^{\left( Cl\right) }=\frac{1}{2}\Tr\sigma _{i}=\delta
_{i0}, $ so the classical model always gives a linear relation between
initial and final ordinary Bloch vectors--- there is no affine term. \ This
is another example of the fact that not all quantum models have a classical
analog. \ A\ simple example of an affine evolution is a qubit initially in
equilibrium at high temperature that is cooled by a bath. \ Then we have
that $\rho ^{S}\left( t=0\right) \approx I/2,$ whereas at long times the
system will be at thermal equilibrium at a lower temperature and the state
populations will be determined by the Boltzmann factor. \ This physical
process cannot be mimicked by classical external noise in the sense of
random unitary evolution. \ \ \ 

Define the set $CM$ of quantum models specified by $H_{cl}$\ and $\mathcal{P}%
.$ \ Then Eq. \ \ref{eq:ticl} defines a map $g:CM\rightarrow RS$. \ 

\subsection{Classical Simulation of Quantum Models}

To demonstrate the quantum-classical equivalence we need to prove the
existence of a functional $\mathcal{P}\left[ h\left( t\right) \right] $ such
that $\rho _{S}^{\left( Cl\right) }\left( t\right) =\rho _{S}^{\left(
Q\right) }\left( t\right) ,$ or, equivalently that $T_{ik}^{\left( Cl\right)
}\left( t\right) =T_{ik}^{\left( Q\right) }\left( t\right) .$ For dephasing
models we shall do more than this--- we shall give the explicit construction
of the classical model.\ 

In the construction we shall need only the discrete version of Eq. (\ref%
{eq:clev}), which is%
\begin{equation*}
\rho ^{\left( Cl\right) }\left( t\right) =\sum_{r}p_{r}U_{Cl}^{\left(
r\right) }\rho ^{\left( Cl\right) }\left( t=0\right) U_{Cl}^{\left( r\right)
\dag },
\end{equation*}%
where $r$ labels the possible noise histories and $U_{Cl}^{\left( r\right) }$
is the evolution operator for that history. The positive numbers $p_{r}$
satisfy \ $\sum_{r}p_{r}=1.$ \ The corresponding statement for the Bloch
vector is 
\begin{equation}
n_{i}\left( t\right) =\sum_{k,r}p_{r}O_{ik}^{\left( r\right) }\left(
t\right) n_{k}\left( 0\right) ,  \label{eq:ocl}
\end{equation}%
and the $O_{ik}^{\left( r\right) }$ are orthogonal matrices for every $r.$ \
Thus the problem of showing that a quantum noise model is actually classical
reduces to the problem of showing that the matrix $T_{ik}^{\left( Q\right) }$
can be written as a convex combination of orthogonal matrices. \ \ 

We shall consider the system-bath model for one qubit of Eq. (\ref{eq:sdef}%
). \ Without loss of generality we choose the total Hamiltonian%
\begin{equation}
H=-\frac{1}{2}B\sigma _{z}+H_{B}\left[ \lambda _{i}\right] +H_{SB}\left[
\lambda _{i}\right] \sigma _{z};  \label{eq:dephasing}
\end{equation}%
so that the qubit has no non-trivial dynamics and the noise is pure
dephasing.

In this model $\sigma _{z}$ is conserved, which implies that 
\begin{equation*}
T_{zz}^{\left( Q\right) }=1\; ;\;T_{xz}^{\left( Q\right) }=T_{zx}^{\left(
Q\right) }=T_{yz}^{\left( Q\right) }=T_{zy}^{\left( Q\right) }=0\text{.}
\end{equation*}%
We also have that $T_{00}^{\left( Q\right) }=1$ and $T_{i0}^{\left( Q\right)
}=T_{0i}^{\left( Q\right) }=0$ for $i>0,$ so we only need to calculate $%
T_{xx}^{\left( Q\right) },T_{yy}^{\left( Q\right) },T_{xy}^{\left( Q\right)
} $ and $T_{yx}^{\left( Q\right) }.$

First we note that the total evolution operator is 
\begin{equation}
U_{Q}=\mathcal{T}\exp \left( -i\int_{0}^{t}H\left( t^{\prime }\right)
dt^{\prime }\right) =u\left( t\right) +v\left( t\right) ~\sigma _{z},
\label{eq:timeordering}
\end{equation}%
where $u\left( t\right) $ and $v\left( t\right) $ are time-dependent bath
operators and $\mathcal{T}$ denotes time-ordering. \ Explicitly,%
\begin{equation}
\begin{aligned} u\left( t\right) &= \frac{1}{2}\Tr_{S}\left[ \mathcal{T}\exp
\left( -i\int_{0}^{t}H\left( t^{\prime }\right) dt^{\prime }\right) \right]
\label{eq:uandv} \\ v\left( t\right) &= \frac{1}{2}\Tr_{S}\left[ \sigma
_{z}\mathcal{T}\exp \left( -i\int_{0}^{t}H\left( t^{\prime }\right)
dt^{\prime }\right) \right] , \end{aligned}
\end{equation}%
where $\Tr_{S}$ is the trace over qubit variables. \ In Eq. \ref%
{eq:timeordering} the time ordering is essential, since the $\lambda _{i}$
do not commute with one another. \ Note that these expressions do not depend
in any way on having a finite-dimensional bath. \ Since $I=UU^{\dag }$ we
have%
\begin{align*}
uu^{\dag }+vv^{\dag }& =I \\
uv^{\dag }+vu^{\dag }& =0.
\end{align*}%
Expressing $T^{\left( Q\right) }$ in terms of $u$ and $v$ we find 
\begin{align}
T_{xx}^{\left( Q\right) }\left( t\right) & =T_{yy}^{\left( Q\right) }\left(
t\right)  \label{eq:tii} \\
& =\Tr\left[ \left( u^{\dag }+v^{\dag }\sigma _{z}\right) \left( \sigma
_{x}\otimes \lambda _{0}\right) \left( u+v\sigma _{z}\right) \left( \sigma
_{x}\otimes \rho ^{B}\left( 0\right) \right) \right]  \notag \\
& =2\Tr_{B}\left[ \left( u^{\dag }u-v^{\dag }v\right) \rho ^{B}\left(
0\right) \right]  \notag \\
T_{xy}^{\left( Q\right) }\left( t\right) & =-T_{yx}^{\left( Q\right) }\left(
t\right)  \label{eq:tij} \\
& =\Tr_{B}\left[ \left( u^{\dag }+v^{\dag }\sigma _{z}\right) \left( \sigma
_{x}\otimes \lambda _{0}\right) \left( u+v\sigma _{z}\right) \left( \sigma
_{y}\otimes \rho ^{B}\left( 0\right) \right) \right]  \notag \\
& =\Tr_{B}\left[ \left( -2iu^{\dag }v+2iv^{\dag }u\right) \rho ^{B}\left(
0\right) \right] ,  \notag
\end{align}%
and defining $c=T_{xx}^{\left( Q\right) }=T_{yy}^{\left( Q\right) }$ and $%
s=-T_{xy}^{\left( Q\right) }=T_{yx}^{\left( Q\right) }$, the matrix $%
T^{\left( Q\right) }$ has the form%
\begin{equation*}
T^{\left( Q\right) }=%
\begin{pmatrix}
1 & 0 & 0 & 0 \\ 
0 & c & -s & 0 \\ 
0 & s & c & 0 \\ 
0 & 0 & 0 & 1%
\end{pmatrix}%
.
\end{equation*}%
with positivity implying that $c^{2}+s^{2}=r^{2}\leq 1.$ \ The submatrix $%
T_{ij}^{\left( Q\right) }$ with $i,j=x,y$ is proportional to an orthogonal $%
2\times 2$ matrix. \ All effects of the bath on the system are summarized by
the quantities $c\left( t\right) $ and $s\left( t\right) .$\ 

The task is now to construct the equivalent classical model. \ In two
dimensions, an orthogonal matrix $M$ is characterized by a single unit
vector $\left( \cos \theta ,\sin \theta \right) ,$ so a convex sum of
orthogonal matrices is of the form%
\begin{equation*}
\sum\limits_{i}p_{i}M_{i}=%
\begin{pmatrix}
\sum\limits_{i}p_{i}\cos \theta _{i} & -\sum\limits_{i}p_{i}\sin \theta _{i}
\\ 
\sum\limits_{i}p_{i}\sin \theta _{i} & \sum\limits_{i}p_{i}\cos \theta _{i}%
\end{pmatrix}%
,
\end{equation*}%
with $p_{i}>0$ and $\sum_{i}p_{i}=1.$ \ This matrix is proportional to the $%
i,j=x,y$ submatrix of $T_{ij}^{\left( Q\right) }.$

We first write the vector $\left( c,s\right) $ as the convex sum of unit
vectors: 
\begin{equation*}
\begin{pmatrix}
c \\ 
s%
\end{pmatrix}%
=\frac{1}{2}%
\begin{pmatrix}
c+\beta s \\ 
s-\beta c%
\end{pmatrix}%
+\frac{1}{2}%
\begin{pmatrix}
c-\beta s \\ 
s+\beta c%
\end{pmatrix}%
\end{equation*}%
where%
\begin{equation}
r^{2}=c^{2}+s^{2}=\left[ T_{xx}^{\left( Q\right) }\right] ^{2}+\left[
T_{xy}^{\left( Q\right) }\right] ^{2}\;;\;\beta =\frac{\sqrt{1-r^{2}}}{r}
\label{eq: r and c}
\end{equation}%
This decomposition of the vector $\left( c,s\right) $ is not unique. \ This
implies that the mapping $g:CM\rightarrow RS$ is \textit{not }injective.

Then the classical evolution submatrix is written as 
\begin{equation*}
T^{\left( Cl\right) }=\frac{1}{2}%
\begin{pmatrix}
\cos \Phi _{1} & -\sin \Phi _{1} \\ 
\sin \Phi _{1} & \cos \Phi _{1}%
\end{pmatrix}%
+\frac{1}{2}%
\begin{pmatrix}
\cos \Phi _{2} & -\sin \Phi _{2} \\ 
\sin \Phi _{2} & \cos \Phi _{2}%
\end{pmatrix}%
.
\end{equation*}%
$T^{\left( Cl\right) }$ is the convex sum of rotations through the two
angles 
\begin{equation}
\begin{aligned} \Phi _{1}\left( t\right) =&\tan ^{-1}\left( \frac{s-\beta
c}{c+\beta s} \right) \\ &\text{and} \\ \Phi _{2}\left( t\right) =&\tan
^{-1}\left( \frac{s+\beta c}{c-\beta s} \right) . \end{aligned}
\label{eq:phi1 and phi2}
\end{equation}%
Hence, by comparison to Eq. (\ref{eq:ocl}), it defines a classical model. \
Define the fields $h_{1}\left( t\right) $ and $h_{2}\left( t\right) $ by 
\begin{equation}
\begin{aligned} h_{1} =&\frac{\partial \Phi _{1}}{\partial t}+B \\
&\text{and} \\ h_{2} =&\frac{\partial \Phi _{2}}{\partial t}+B. \end{aligned}
\label{eq:h1 and h2}
\end{equation}%
Then the equivalent classical model is given by%
\begin{equation*}
H_{Cl}=-\frac{1}{2}B\sigma _{z}+\frac{1}{2}h\left( t\right) \sigma _{z}
\end{equation*}%
and the probability distribution for $h$ is: $h\left( t\right) =h_{1}\left(
t\right) $ with probability $1/2$ and $h\left( t\right) =h_{2}\left(
t\right) $ with probability $1/2.$ \ \ More formally, $\mathcal{P}\left[
h\left( t\right) \right] =\left( 1/2\right) \delta \left[ h-h_{1}\right]
+\left( 1/2\right) \delta \left[ h-h_{2}\right] .$ \ Writing $\Phi
_{i}\left( t\right) =\int_{0}^{t}\left( -B+h_{i}\left( t^{\prime }\right)
\right) dt^{\prime }$ and following Eq. (\ref{eq:ticl}) we find%
\begin{align*}
T_{ij}^{\left( Cl\right) }=& \frac{1}{4}\Tr\left[ \sigma _{i}e^{\frac{i}{2}%
\sigma _{z}\Phi _{1}\left( t\right) }\sigma _{j}e^{-\frac{i}{2}\sigma
_{z}\Phi _{1}\left( t\right) }\right] \\
& +\frac{1}{4}\Tr\left[ \sigma _{i}e^{\frac{i}{2}\sigma _{z}\Phi _{2}\left(
t\right) }\sigma _{j}e^{-\frac{i}{2}\sigma _{z}\Phi _{2}\left( t\right) }%
\right]
\end{align*}

The contents of this section can be summarized by the following theorem and 
\emph{constructive} proof.

\begin{thm*}
The dynamics of the open quantum system given by Eq. (\ref{eq:dephasing})
can be simulated by the classical noise model given by Eq. (\ref{eq:class})
with $h(t)$ given by Eqs. (\ref{eq:h1 and h2}); the density matrix of the
qubit is the same for the two models at all times. \ 
\end{thm*}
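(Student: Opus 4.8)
The plan is to verify directly that the classical model constructed in this section reproduces the quantum transfer matrix, i.e.\ that $T^{(Cl)}_{ik}(t)=T^{(Q)}_{ik}(t)$ for all $t$. Since the qubit density matrix is determined by the expanded Bloch vector through $\rho^{S}=\tfrac12\sum_{i}n_{i}\sigma_{i}$, and the expanded Bloch vector evolves by $n_{i}(t)=\sum_{k}T_{ik}(t)\,n_{k}(0)$, equality of the two transfer matrices at every $t$ (the two models being started from the same qubit state $\rho^{S}(0)$) is equivalent to the assertion that the two qubit density matrices agree at all times, which is the theorem. All of the quantum-side input has already been assembled above: for the pure-dephasing model of Eq.~(\ref{eq:dephasing}), $T^{(Q)}$ has the displayed block form, with $T^{(Q)}_{00}=T^{(Q)}_{zz}=1$, with vanishing $0i$, $i0$, $xz$, $zx$, $yz$, $zy$ entries, and with the $2\times2$ block on the indices $\{x,y\}$ equal to $\begin{pmatrix}c & -s\\ s & c\end{pmatrix}$, where $c^{2}+s^{2}=r^{2}\le 1$ and $c(t),s(t)$ are given by Eqs.~(\ref{eq:tii})--(\ref{eq:tij}).

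First I would check that $T^{(Cl)}$ has the identical block structure. Because $H_{Cl}$ of Eq.~(\ref{eq:class}) is a multiple of $\sigma_{z}$ for every noise history, $[\sigma_{z},H_{Cl}(t)]=0$, so $\sigma_{z}$ is conserved classically, giving $T^{(Cl)}_{zz}=1$ and $T^{(Cl)}_{xz}=T^{(Cl)}_{zx}=T^{(Cl)}_{yz}=T^{(Cl)}_{zy}=0$; and, as already noted below Eq.~(\ref{eq:ticl}), $T^{(Cl)}_{i0}=\delta_{i0}$ with $T^{(Cl)}_{0i}=0$ for $i>0$. Hence $T^{(Cl)}$ and $T^{(Q)}$ can differ only in the $\{x,y\}$ block, and it suffices to match that $2\times2$ submatrix.

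Next I would evaluate that block from Eq.~(\ref{eq:ticl}) with $\mathcal{P}[h]=\tfrac12\delta[h-h_{1}]+\tfrac12\delta[h-h_{2}]$. Writing $\Phi_{r}(t)=\int_{0}^{t}\bigl(-B+h_{r}(t')\bigr)\,dt'$ (so $\Phi_{r}(0)=0$), the classical evolution operator for history $r$ is a rotation of the qubit about $z$ by the angle $\Phi_{r}$, and substituting it into Eq.~(\ref{eq:ticl}) reduces the trace over $\sigma_{x},\sigma_{y}$ to $\tfrac12$ times the $2\times2$ rotation matrix through $\Phi_{r}$, which is exactly the decomposition of $T^{(Cl)}$ displayed just above Eq.~(\ref{eq:phi1 and phi2}). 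It then remains to see that this convex average equals $\begin{pmatrix}c & -s\\ s & c\end{pmatrix}$, i.e.\ that $\tfrac12(\cos\Phi_{1}+\cos\Phi_{2})=c$ and $\tfrac12(\sin\Phi_{1}+\sin\Phi_{2})=s$. This follows from the elementary identity $(c+\beta s)^{2}+(s-\beta c)^{2}=(c^{2}+s^{2})(1+\beta^{2})=r^{2}\,(1/r^{2})=1$ (and likewise with the signs of the $\beta$-terms reversed), which holds because $\beta=\sqrt{1-r^{2}}/r$ in Eq.~(\ref{eq: r and c}): it shows that the definitions (\ref{eq:phi1 and phi2}) make $(\cos\Phi_{1},\sin\Phi_{1})=(c+\beta s,\,s-\beta c)$ and $(\cos\Phi_{2},\sin\Phi_{2})=(c-\beta s,\,s+\beta c)$, whose half-sum is $(c,s)$. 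Therefore the two $\{x,y\}$ blocks coincide and $T^{(Cl)}(t)=T^{(Q)}(t)$ for all $t$, which proves the theorem.

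The routine verification above conceals the one point that needs care: the fields $h_{1},h_{2}$ of Eq.~(\ref{eq:h1 and h2}) must actually exist and be well defined. One must take the angles $\Phi_{1}(t),\Phi_{2}(t)$ to be the \emph{continuous} branches of the arctangents of Eq.~(\ref{eq:phi1 and phi2}) fixed by $\Phi_{1}(0)=\Phi_{2}(0)=0$ --- equivalently, the continuous lifts of the unit vectors $(c\pm\beta s,\,s\mp\beta c)$ --- rather than the principal values, and one must then check that they are differentiable in $t$ so that $h_{r}=\partial_{t}\Phi_{r}+B$ is meaningful. This uses smoothness of $c(t),s(t)$, inherited from $U_{Q}=\exp(-iHt)$, and it requires $r(t)=\sqrt{c(t)^{2}+s(t)^{2}}>0$; at any isolated instants where $r$ vanishes the vectors $(\cos\Phi_{r},\sin\Phi_{r})$ still have one-sided limits that average to the (now zero) $\{x,y\}$ block, so the equality $T^{(Cl)}=T^{(Q)}$ persists there by continuity. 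I expect reconciling these branch-and-regularity issues with the definition of $h_{r}$ to be the main obstacle; everything else is the direct computation sketched above.
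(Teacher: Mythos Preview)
Your proposal is correct and follows essentially the same approach as the paper's proof: both argue that the explicit construction of $h_{1},h_{2}$ from $c,s,\beta$ forces $T^{(Cl)}(t)=T^{(Q)}(t)$ for all $t$, hence the qubit density matrices coincide. Your version is more detailed---you actually check the block structure of $T^{(Cl)}$ and verify the key identity $(c\pm\beta s)^{2}+(s\mp\beta c)^{2}=1$ that makes the convex sum of the two rotations reproduce $(c,s)$---whereas the paper's proof is essentially a recap of the construction; you also correctly flag the branch-choice and differentiability issues for $\Phi_{r}$ at points where $r(t)\to 0$, which the paper does not address.
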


\begin{proof} The quantum Hamiltonian $H$ of Eq. (\ref{eq:dephasing}) determines the
noise functions $u(t)$ and $v\left( t\right) $ in Eqs. (\ref{eq:uandv}) and
hence the matrix $T^{\left( Q\right) }$ in Eqs. (\ref{eq:tii}) and (\ref{eq:tij}) uniquely. \ The
classical fields $h_{1}\left( t\right) $ and $h_{2}\left( t\right) $ are
given explicitly in terms of the matrix elements of $T^{\left( Q\right) }$
by Eqs. (\ref{eq: r and c}) through (\ref{eq:h1 and h2}). \ 
$h_{1}\left( t\right) $ and $h_{2}\left( t\right) $
in turn define the classical Hamiltonian $H_{Cl}$ and noise probability functional  
$\mathcal{P}$ given in Eqs. (\ref{eq:class}) and (\ref{eq:clev}). \ 
The evolution of the qubit density matrix according to $H_{Q}$ with the usual 
partial trace is precisely the same as the qubit evolution according to the classical
Hamiltonian $H_{Cl}$ with the averaging over noise histories.
\end{proof}

\subsection{Equivalent Classical and Quantum Models}

Our terminology is to call a classical model equivalent to a quantum one if $%
rs\in RS,$ the evolution of the principal system, is the same. \ More
precisely, we say that $qm$ is equivalent to $cm$ if there exists $rs$ such
that $f\left( qm\right) =rs$ and $g\left( cm\right) =rs.$ \ We now complete
our characterization of $f$ using a standard method \cite{nielsenchuang},
that immediately implies that many quantum models correspond to any
particular $rs$. \ The proof is as follows. \ It is not restricted to single
qubits. \ Let the principal system have an $S$-dimensional Hilbert space $%
\mathcal{H}_{S}.$ \ An RC model has an EPS of the form%
\begin{equation}
\rho _{S}\left( t\right) =\sum_{\alpha }M_{\alpha }\left( t\right) \rho
_{S}\left( 0\right) M_{\alpha }^{\dag }\left( t\right) ,  \label{eq:1}
\end{equation}%
where $M_{\alpha }=\sqrt{p_{\alpha }}U_{\alpha },$ and $U_{\alpha }$ is the
unitary operator corresponding to the $\alpha $-th history of the classical
noise; $p_{\alpha }$ is the probability of that noise. \ $\rho _{S}$ is an $%
S\times S$ matrix. \ We then construct the equivalent quantum model as
follows. \ There is a bath, which is a $D$-dimensional quantum system with
Hilbert space $\mathcal{H}_{B}$ which has basis states $\left\{ \left\vert
0\right\rangle _{B},\left\vert 1\right\rangle _{B},\left\vert 2\right\rangle
_{B},...,\left\vert D-1\right\rangle _{B}\right\} .$ \ We associate a bath
state $\left\vert \alpha \right\rangle _{B}$ with each of the classical
histories $\alpha $. The bath is in the initial state $\left\vert
0\right\rangle _{B}$ and the total initial state is: 
\begin{equation}
\rho _{T}\left( t=0\right) =\rho _{S}\otimes \left\vert 0\right\rangle
_{B}~\left\langle 0\right\vert _{B}.  \label{eq:2}
\end{equation}%
The assumption that the environment is in a pure state is not too
restrictive since we can increase $D$ and purify the state if we wish. \ The
product form is restrictive, however, as already stated. \ We now need a
total evolution operator $U_{T}$ that is an $\left( SD\times SD\right) $
unitary matrix that acts on the coupled system and bath, i.e., it acts in
the Hilbert space $\mathcal{H}_{B}\otimes \mathcal{H}_{S}.$ \ We partially
define it by%
\begin{equation}
U_{T}\left( \left\vert n\right\rangle _{S}\otimes \left\vert 0\right\rangle
_{B}\right) =\sum_{\alpha =0}^{D-1}M_{\alpha }~\left\vert n\right\rangle
_{S}~\otimes \left\vert \alpha \right\rangle _{B}.  \label{eq:3}
\end{equation}%
This only gives the action of $U_{T}$ on states of the total system that are
of the form $\left( \left\vert n\right\rangle _{S}\otimes \left\vert
0\right\rangle _{B}\right) ,$ i.e., where the state of the system is
arbitrary but the environment is in the $\left\vert 0\right\rangle _{B}$
state. \ This is enough to show that $f$ is surjective. \ This definition
determines only $S$ columns, i.e., only $S^{2}D$ of the $S^{2}D^{2}$ of the
elements in $U_{T}.$ \ On states for which it is so far defined, $U$
preserves the inner product \cite{nielsenchuang}. \ To complete the
definition of $U,$ we can use Gram-Schmidt orthogonalization to get the
other $S\left( D-1\right) $ columns of $U_{T}.$ \ This process clearly
leaves considerable arbitrariness in the quantum model, so there are many
quantum models that correspond to a given classical model, i.e., $f$ is not
injective. \ We have already seen that $g$ is not injective, and earlier
work [6,7] had shown that $g$ is not surjective.

\section{Classical Simulation of Single-Qubit Dephasing Models}
\label{sec: single}

In this section we give three examples of the explicit construction of
random classical fields that correspond to well-known quantum models. \
Since the construction proceeds from the expression for the evolution of the
reduced density matrix, it is possible to do the construction exactly when
the quantum model is exactly solvable, (Sec. III A), and also when only
approximate solutions are known (Secs. III B and III C).

\subsection{Spin-Boson Model}

The spin-boson Hamiltonian is 
\begin{equation*}
H_{SB}=-\frac{1}{2}B\sigma _{z}+\sigma _{z}\sum_{k}\left( g_{k}b_{k}^{\dag
}+g_{k}^{\ast }b_{k}\right) +\sum_{k}\omega _{k}b_{k}^{\dag }b_{k}.
\end{equation*}%
The initial density matrix is $\rho \left( t=0\right) =\rho _{S}\left(
t=0\right) \otimes \rho _{B}\left( t=0\right) ,$ where $\rho _{S}\left(
t=0\right) $ is the initial state of the 2-level system and $\rho _{S}\left(
t=0\right) $ is the thermal state of the bath at temperature $1/\beta $. \
We are interested in the case of a macroscopic bath, and it is then
conventional to define the coupling function $J\left( \omega \right)
=4\sum_{k}\delta \left( \omega -\omega _{k}\right) \left\vert
g_{k}\right\vert ^{2}.$ \ Using $U_{Q}\left( t\right) =\exp \left(
-iH_{Q}t\right) ,$ the system dynamics is given by $\rho _{S}^{\left(
Q\right) }\left( t\right) =\Tr_{B}U_{Q}\left( t\right) \rho \left(
t=0\right) U_{Q}^{\dag }\left( t\right) $. \ The initially unentangled
system and bath are entangled by $U_{Q}\left( t\right) $. \ This model has
been well-studied and is exactly solvable \cite{breuer}, with the result
that the off-diagonal elements of $\rho _{S}\left( t\right) $ are
proportional to $e^{\Gamma \left( t\right) },$ where $\Gamma \left( t\right)
=-\int_{0}^{\infty }d\omega ~J\left( \omega \right) \coth \left( \beta
\omega /2\right) \left( 1-\cos \omega t\right) /\omega ^{2}<0.$ \ 

More explicitly, 
\begin{align*}
\rho^{(Q)} =& 
\begin{pmatrix}
\rho_{00} & \rho_{01}e^{\Gamma(t)+iBt} \\ 
\rho_{10}e^{\Gamma(t)-iBt} & \rho_{11}%
\end{pmatrix}
\\
=&\frac{\rho_{00}+\rho_{11}}{2}\sigma_{0} + \frac{\rho_{00}-\rho_{11}}{2}%
\sigma_{z} \\
& + \frac{1}{2}e^{\Gamma(t)}
\left(\rho_{01}e^{iBt}+\rho_{10}e^{-iBt}\right)\sigma_x \\
& + \frac{1}{2}e^{\Gamma(t)} \left(\rho_{10}e^{-iBt}-\rho_{01}e^{iBt}
\right)\sigma_y.
\end{align*}
$\rho_{ij}$ are the (time-independent) initial values for the elements of $%
\rho$. \ In terms of the Bloch vector we have 
\begin{align*}
n_{x}\left(t\right) &= e^{\Gamma(t)} \left(n_{x}\left(0\right) \cos Bt
+n_{y}\left(0\right)\sin Bt \right) \\
n_{y}\left(t\right) &= e^{\Gamma(t)} \left(-n_{x}\left(0\right) \sin Bt
+n_{y}\left(0\right)\cos Bt \right)
\end{align*}
We are free to consider the polarization vector along the $z$ direction by
change of basis and we can also begin with a pure state since total
evolution is simply proportional to the initial polarization vector. Written
in terms of elements of $T^{(Q)}$, 
\begin{align*}
c_{sb} &= e^{\Gamma(t)}\cos Bt \\
s_{sb} &= -e^{\Gamma(t)}\sin Bt \\
\beta_{sb} &= \frac{\sqrt{1-e^{2\Gamma(t)}}}{e^{\Gamma(t)}}.
\end{align*}
According to the prescription in the previous section, therefore, 
\begin{equation*}
H_{Cl}^{(sb)} = -\frac{1}{2}B\sigma_z+\frac{1}{2} h^{(sb)}\left(t\right)
\sigma_{z}
\end{equation*}
and the noise source $h^{(sb)}\left(t\right)$ is 
\begin{equation}
\begin{aligned} h_1^{(sb)}\left(t\right) &= \frac{\partial}{\partial
t}\tan^{-1}\left(\frac{s_{sb}-\beta_{sb}
c_{sb}}{c_{sb}+\beta_{sb}s_{sb}}\right) +B \;; \; p_1 = \frac{1}{2} \\
h_2^{(sb)}\left(t\right) &= \frac{\partial}{\partial
t}\tan^{-1}\left(\frac{s_{sb}+\beta_{sb}
c_{sb}}{c_{sb}-\beta_{sb}s_{sb}}\right) +B \; ; \; p_2 = \frac{1}{2}
\label{eq: spin boson} \end{aligned}
\end{equation}
where $p_{i}$ is the probability of $h_i$.

Eq. \ref{eq: spin boson} gives the result for a general coupling function $%
J\left( \omega\right) .$ \ A common choice for $J$ is the ohmic bath: $%
J(\omega)=A\omega e^{-\omega/\Omega}$, important in quantum optics. In this
case there is the exact result \cite{breuer} 
\begin{equation*}
\Gamma\left(t\right) \propto -\frac{1}{2}\ln \left(1+\Omega^{2}t^{2}\right)
- \ln \left( \frac{\sinh\left(t/\tau\right)}{t/\tau}\right).
\end{equation*}
where $\Omega$ is the cutoff frequency and $\tau = \frac{1}{\pi T}$ is the
thermal correlation time where we have taken $k_{B} = 1$. \ This expression
for $\Gamma\left(t\right)$ allows an exact calculation of $h_1\left(t\right)$
and $h_2\left(t\right)$ up to an overall scale factor. \ The results of this
calculation are plotted in Fig. \ref{fig:sb} for the case $\Omega \tau=20$
where we have taken the overall scale factor to be one. 
\begin{figure}[h]
\includegraphics[width=8cm]{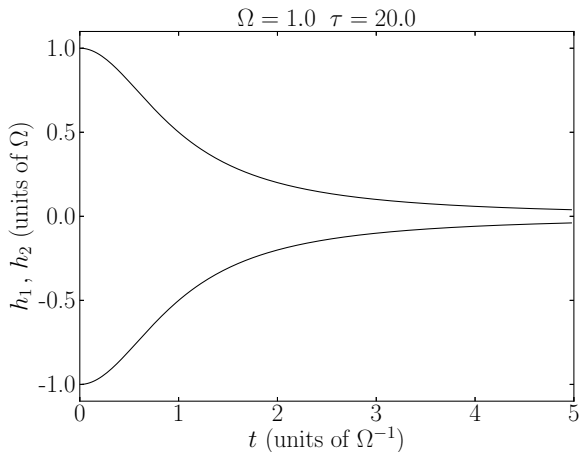}
\caption{Fields $h_1(t)$ and $h_2(t)$ for the spin-boson model.}
\label{fig:sb}
\end{figure}

The classical fields have a complicated form, with an initial quadratic
decay crossing over to exponential at longer times. \ The behavior for $%
t\gg\tau$ is due to thermal decay while the behavior for $t\ll\tau$ is
determined by the cutoff $\Omega$ and is due to fluctuations of the field 
\cite{breuer}.

\subsection{Central Spin Model}

The central spin Hamiltonian describes a qubit coupled to a bath of nuclear
spins. \ In the case of free induction decay, the Hamiltonian is 
\begin{equation*}
H_{CS}=-\frac{1}{2}B\sigma_z + H_{int} + H_{hf}
\end{equation*}
where 
\begin{equation*}
H_{int} = \sum_i\omega_i J_i^z + \sum_{i\neq
j}b_{ij}\left(J_i^{+}J_j^{-}-2J_i^zJ_j^z\right)
\end{equation*}
and 
\begin{equation*}
H_{hf} = \sigma_z \left(\frac{1}{2}\sum_i A_i J_i^z + \sum_{i\neq j}\frac{%
A_i A_j}{4B}J_i^{+}J_j^{-}\right).
\end{equation*}
$\omega_i$ are the nuclear Zeeman splittings, $b_{ij}$ contain the dipolar
interaction, and $A_i$ are the hyperfine couplings. \ We assume that the
initial state is a product state of the qubit and the equilibrium bath
state. \ For realistic conditions on $\omega_i$, $b_{ij}$, and $A_i$, the
qubit dynamics can be calculated approximately at experimentally relevant
time scales \cite{witzel,witzel2,witzel3}. \ The off-diagonal component of
the qubit density matrix is given by $\rho_{10}\left(t\right) =
\rho_{10}\left(0\right) D_{cs}\left(t\right)$ where 
\begin{equation*}
D_{cs}\left(t\right) = \frac{e^{i\arctan \left(\alpha t\right)-iBt}}{\sqrt{%
1+\alpha^2t^2}}
\end{equation*}
where $\alpha$ is a complicated function of the parameters of the model. \
The explicit dependence is found in \cite{witzel} and $\alpha$ determines
the relevant time scales since the model is valid only for $\alpha t \ll 1$.
\ In reference \cite{witzel}, $1/\alpha \approx 20\,\mu\text{s}$ for a GaAs
dot. \ 

The qubit gains an additional phase from the bath interaction. \ In terms of 
$T^{\left(Q\right)}$, we have 
\begin{align*}
c_{cs} &= \frac{1}{\sqrt{1+\alpha^2t^2}} \cos \left(\arctan\left(\alpha
t\right)-Bt\right) \\
s_{cs} &= \frac{1}{\sqrt{1+\alpha^2t^2}} \sin \left(\arctan\left(\alpha
t\right)-Bt\right) \\
\beta_{cs} &= \frac{\alpha t}{\sqrt{1+\alpha^2t^2}}.
\end{align*}
Then we can write 
\begin{align*}
\Phi_1 &= -Bt + \arctan\left(\alpha t\right) + \arcsin\left(\frac{\alpha t}{%
\sqrt{1+\alpha^2 t^2}}\right) \\
\Phi_2 &= -Bt + \arctan\left(\alpha t\right) - \arcsin\left(\frac{\alpha t}{%
\sqrt{1+\alpha^2 t^2}}\right)
\end{align*}
Noting that $\arctan \left(\alpha t\right) = \arcsin\left(\frac{\alpha t}{%
\sqrt{1+\alpha^2 t^2}}\right)$ and applying Eqs. (\ref{eq:h1 and h2}) we get 
\begin{align*}
h_1^{(cs)}\left(t\right) =& \frac{2\alpha}{1+\alpha^2t^2} \\
&\text{and} \\
h_2^{(cs)}\left(t\right) =& 0.
\end{align*}
This model shows a Lorentzian fall-off of one of the two possible fields but
the other one vanishes. \ Unlike the spin-boson model, these fields have a
non-zero time average which indicates that the central spin noise induces an
additional relative phase $\phi$ between the two system states. \
Furthermore, if we write the decoherence function as $r e^{i \phi}$, for $%
r,\phi \in \mathbb{R}$ so that $\rho_{10}(t) = \left( \rho_{10}(0)e^{-iBt}
\right)\left( r(t) e^{i\phi(t)} \right) $, then the central spin model gives
a simple relation between $r$ and $\phi$. \ Namely, $r=\cos \phi$ for all
values of $t$.

\subsection{Quantum Impurity Model}

The quantum impurity Hamiltonian is 
\begin{equation*}
H_{QI} = -\frac{1}{2}B\sigma_z + \frac{1}{2}v\left(d^{\dagger}d\right)%
\sigma_z + H_{B}
\end{equation*}
where 
\begin{equation*}
H_{B} = \epsilon_{0} d^{\dagger} d + \sum_k \left(t_k c_k^{\dagger} d + 
\text{H.c.}\right) + \sum_k\epsilon_k c_k^{\dagger} c_k.
\end{equation*}
$c_k^{\dagger}$ creates a reservoir electron and $d^{\dagger}$ creates an
impurity electron. \ $v$ represents the qubit-impurity coupling strength and 
$t_k$ are the tunneling amplitudes between the impurity and level $k$. \
Define the tunneling rate $\gamma = 2\pi\sum_k\left| t_k
\right|^2\delta(\epsilon_k-\epsilon_0)$. \ Assuming an initial product state
and an equilibrium bath, this model can be solved via numerically exact
techniques \cite{marquardt}. \ The off-diagonal components of the qubit
density matrix satisfy 
\begin{equation*}
\rho_{10}\left(t\right) = \rho_{10}\left(0\right) e^{-iBt} D_{qi}(t).
\end{equation*}
Using the numerical methods outlined in \cite{marquardt}, we compute $D(t) =
r(t) e^{i\phi(t)}$ for $r(t)>0$. \ Applying the method outlined earlier
yields 
\begin{equation*}
\Phi_i = -Bt+\phi(t)\pm\cos^{-1}(r(t))
\end{equation*}
and the classical noise source $h^{(qi)}\left(t\right)$ can be calculated
using Eqs. (\ref{eq:h1 and h2}). \ Numerical results are shown in Fig. \ref%
{fig:qi}. 
\begin{figure}[h]
\includegraphics[width=8cm]{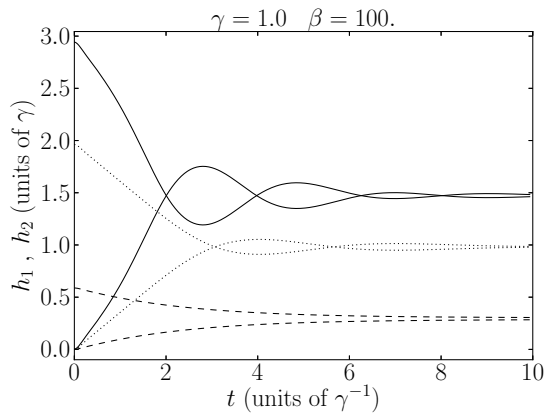}
\caption{Fields $h_1(t)$ and $h_2(t)$ for the quantum impurity model. The
solid line shows coupling $v=3$ (quantum phase). The dotted line shows $v=2$
(crossover). The dashed line shows $v=0.6$ (classical phase). \ Note the
oscillatory behavior of the stronger coupling, indicative of coherence
oscillation.}
\label{fig:qi}
\end{figure}

This model is of particular interest because it shows a ``classical'' phase
for $v\ll\gamma$ and a ``quantum'' phase for $v\gg\gamma$ with a crossover
region in between. \ The quantum phase is characterized by oscillation in
the coherence measures such as the visibility. \ These oscillations are
clearly caused by the oscillations in the noise fields as shown in Fig. \ref%
{fig:qi}. \ There is, however, nothing quantum about the fields at any value
of $v$; they are classical noise sources.

\section{Dephasing Of Multiqubit Systems}
\label{sec: multi}

In this section we generalize the one-qubit dephasing model to a class of
multiqubit models that satisfy a transitivity condition. \ In contrast to
the single-qubit case, there are multiqubit dephasing models that cannot be
classically simulated \cite{strunz2}. \ However, there are multiqubit
dephasing models that \emph{can} be simulated classically and that cannot be
reduced to independent qubits.

\subsection{Two-Qubit Dephasing}

We will begin with the two-qubit case. The total quantum Hamiltonian again
has the form 
\begin{equation*}
H=H_{S}+H_{B}+H_{SB}.
\end{equation*}%
Just as in the single qubit case, we define dephasing models by the
condition $\left[ H_{S},H_{SB}\right] =0.$ \ The system Hamiltonian $H_{S}$
can be expanded in the 15 generators of $SU\left( 4\right) .$ \ Now we note
that the subset of generators having the form $\sigma _{i}\otimes \sigma
_{i} $ is mutually commuting. \ Consider all models for which these
operators commute with the total Hamiltonian. \ That is, 
\begin{equation*}
\left[ H,\sigma _{i}\otimes \sigma _{i}\right] =0.
\end{equation*}%
for each $i$. \ Note that this class of models does \textit{not} reduce to
independent qubits. \ For example, the bath could be a set of phonons (or
other quantized fields) that modulates an isotropic Heisenberg coupling
between two spin 1/2 particles that have a time-independent Ising
interaction with each other. \ 

Once again we assume that the system and bath are initially in a product
state and that the full density matrix reduces to the system density matrix $%
\rho $. \ Note that since the generators $\sigma _{i}\otimes \sigma _{i}$
are mutually commuting they can be simultaneously diagonalized and the set
of eigenstates is equal to the set of maximally entangled Bell states. \ In
this basis the diagonal elements of $\rho $ are constant, i.e. $\rho
_{ii}(t)=\rho _{ii}(0)$. Thus, in this model, the populations of the Bell
states remain constant in time. Note that these qubits are entangled and
thus are not independent. We will continue to work in this basis for the
remainder of the section.

For a classical noise source to simulate the system-bath Hamiltonian and
preserve the diagonal elements, each noise source must be diagonal. \ Note
that in the usual basis, this is equivalent to the condition that each noise
term is a linear combination of the generators $\sigma _{i}\otimes \sigma
_{i}$. \ If this is the case, then the time evolution of the reduced density
matrix is given by $\rho _{ij}(t)=r_{ij}(t)\rho _{ij}(0),$ (no summations)
and $r_{ij}$ satisfies the conditions $r_{ij}(t)=r_{ji}(t)^{\ast }$, $%
\left\vert r_{ij}(t)\right\vert \leq 1$, and $r_{ii}(t)=r_{ij}(0)=1$.

To simulate this quantum system, we need classical noise sources $H_{\alpha }
$ acting with a probability distribution $p(\alpha )$ for $\alpha \in 
\mathbb{R}$. \ The corresponding evolution of the two-qubit density matrix
is given by 
\begin{equation*}
\rho (t)=\int_{-\infty }^{\infty }p(\alpha )U_{\alpha }(t)\rho (0)U_{\alpha
}^{\dagger }(t)d\alpha .
\end{equation*}%
In our basis of Bell states, the set of valid noise terms $H_{\alpha }$ is
simply the set of diagonal matrices. Thus, define the diagonal entries of
each $H_{\alpha }$ as $\alpha d_{i}(t)$ so that the set $\left\{ H_{\alpha
}\right\} $ is characterized by a single parameter, $\alpha $. \ This
assumption is a particularly simple case.

Time evolution is given by diagonal unitary operators $U_{\alpha }$ with
diagonal entries $\exp \left( -i\alpha \theta _{i}(t)\right) $ for 
\begin{equation*}
\theta _{i}(t)=\int_{0}^{t}d_{i}(t^{\prime })dt^{\prime }.
\end{equation*}%
Then the action $U_{\alpha }\rho U_{\alpha }^{\dagger }$ gives 
\begin{equation*}
\rho _{ij}\rightarrow e^{-i\alpha \gamma _{ij}(t)}\rho _{ij}
\end{equation*}%
where $\gamma _{ij}(t)=\theta _{i}(t)-\theta _{j}(t)$. \ Applying this
result and averaging the result over the probability distribution gives 
\begin{align}
\rho _{ij}(t)=& \int_{-\infty }^{\infty }p(\alpha )\rho _{ij}(0)e^{-i\alpha
\gamma _{ij}(t)}d\alpha   \notag \\
=& \rho _{ij}(0)\tilde{p}\left( \gamma _{ij}(t)\right) 
\end{align}%
where $\tilde{p}(t)=\int_{-\infty }^{\infty }p(\alpha )e^{-i\alpha t}d\alpha 
$. \ That is, for the model specified above, 
\begin{equation}
r_{ij}(t)=\tilde{p}(\gamma _{ij}(t)).
\end{equation}%
$r_{ij}$ is just the Fourier transform of the probability distribution $%
p\left( \alpha \right) $, and we can choose $p\left( \alpha \right) $ as we
wish. \ This completes the classical description of the process. \ 

The class of quantum systems susceptible to this construction is not very
broad: the fact that $\gamma _{ij}$ is a difference of two functions is
quite restrictive. \ $\gamma _{ij}\left( t\right) $ must satisfy the
transitivity condition:~$\gamma _{ij}\left( t\right) =\gamma _{ik}\left(
t\right) +\gamma _{kj}\left( t\right) $ for all $i,j,k,$ which leads in turn
to a restrictive set of conditions on the $r_{ij}(t)$. \ Such models are
simple enough to construct however, since we only need to specify the $%
\theta _{i}(t)$ and transitivity is automatically satisfied. However, the
transitivity condition limits us to three noise functions $\theta _{i}$ out
of a possible six $\gamma _{ij}$.

\subsection{Generalization to Many Qubits}

The results for two qubits can be extended to more qubits in a
straightforward manner. Let the principal system consist of $n$ qubits. \ In
the quantum case, we consider the principal system coupled to a bath with
evolution governed by a system-bath Hamiltonian $H_{SB}$.

In a system of $n$ qubits, the Pauli operators \cite{nielsenchuang} can be
formed as the tensor products of 2-qubit Pauli matrices. \ Including the
identity, there are $4^{n}$ Pauli operators that act on the space of $n$
qubits. \ It has been shown \cite{lawrence} that the $4^{n}-1$ Pauli
operators (excluding the $2^{n}\times 2^{n}$ identity) can be partitioned
into $2^{n}+1$ unique sets of $2^{n}-1$ pairwise commuting operators. \ Let $%
A$ be a subset of the Pauli operators consisting of $2^{n}-1$ commuting
operators. \ For example, $A$ could be the subset defined by all tensor
products of $\sigma _{0}$ and $\sigma _{3}$ (excluding the identity).

The Pauli operators $O_{i}\in A$ are mutually commuting and can be
simultaneously diagonalized. \ Along with the identity, we have $2^{n}$
independent operators. \ Working in a basis such that each $O_{i}$ is
diagonal, these operators span the diagonal matrices. \ Consider a
system-bath Hamiltonian $H_{SB}$ such that 
\begin{equation*}
\left[ H_{SB},O_{i}\right] =0
\end{equation*}%
for each $O_{i}$. Since the $O_{i}$ are simultaneously diagonalized, there
is no population transfer in the basis of simultaneous eigenstates and it is
clear that the above conditions produce dephasing noise and we can define
the analogous quantities $r_{ij}(t)$ characterizing the evolution of the
principal system. However, as in the two-qubit case, not all of the quantum
models satisfying these conditions can be simulated classically.

We can consider a random noise model acting on the principal system
specified by $\left\{ H_{\alpha }\right\} $ and $p(\alpha )$ with all of the
noise terms $H_{\alpha }$ equal to linear combinations of $O_{i}\in A$.
Proceeding as in the two-qubit case, define the analogous quantities $\gamma
_{ij}(t)$. These $\gamma _{ij}$ remain subject to the transitivity condition 
$\gamma _{ij}(t)=\gamma _{ik}(t)+\gamma _{kj}(t)$. As in the 2-qubit case, a
sufficient condition for classical simulation is that $r_{ij}(t)=\tilde{p}%
(\gamma _{ij}(t))$. For many qubits, the transitivity condition on $\gamma
_{ij}$ becomes increasingly restrictive. \ Once again, however, we only need
to specify the $\theta _{i}(t)$ in order to construct such a model. With $n$
qubits, this amounts to specifying $2^{n}-1$ degrees of freedom, out of a
possible $\left( 2^{n-1}\right) \left( 2^{n}-1\right) $.

\section{N-dimensional depolarization}
\label{sec: depol}

Given a principal quantum system in $N$ dimensions with initial density
matrix $\rho _{0}$, the depolarization channel is given by 
\begin{equation*}
\rho (t)=(1-p(t))\rho _{0}+p(t)\left( \frac{1}{N}\right) I
\end{equation*}%
where $p(t)\in  [0,1]$ for all $t$ and $p(0)=0$. \ $I$ is the $N\times
N$ identity matrix. \ In this section we drop the subscripts on $\rho .$ \ 

A quantum model for depolarization of $n$ $(N=2^{n})$ qubits is given by the
Kraus operators $M_{\underline{i}}$ where $\underline{i}=\left(
i_{1},i_{2},\ldots ,i_{n}\right) $ with $i_{j}\in \left\{ 0,1,2,3\right\} $ 
\begin{equation}
M_{\underline{0}}(t)=\sqrt{1-\frac{4^{n}-1}{4^{n}}p(t)}I
\end{equation}%
and for $\underline{i}\not=\underline{0}$ 
\begin{equation}
M_{\underline{i}}(t)=\frac{\sqrt{p(t)}}{2^{n}}\left( \sigma _{i_{1}}\otimes
\sigma _{i_{2}}\otimes \cdots \otimes \sigma _{i_{n}}\right) .
\end{equation}%
Then 
\begin{equation}
\rho (t)=\sum_{\underline{i}}M_{\underline{i}}\rho _{0}M_{\underline{i}%
}^{\dagger }
\end{equation}%
holds for $\rho (t)$ defined above \cite{nielsenchuang}.

We now present a classical model for depolarization in arbitrary dimension.
\ Consider the unitary group $SU(N)$. $\ N\ $need not be a power of 2, so the
following construction holds not only for qubit arrays, but also for qudits.
Each time-independent $U\in SU(N)$ can be expanded in terms of its
eigenvectors $\left\vert j\right\rangle $ as 
\begin{equation*}
U=\sum_{j}e^{-id_{j}}\left\vert j\right\rangle \left\langle j\right\vert 
\end{equation*}%
with $d_{j}\in \left( -\pi ,\pi \right] $. \ For a fixed $U$, define $H_{U}$
as 
\begin{equation*}
H_{U}=\sum_{j}d_{j}\left\vert j\right\rangle \left\langle j\right\vert 
\end{equation*}%
so that 
\begin{equation*}
U=\left. e^{-iH_{U}t}\right\vert _{t=1}.
\end{equation*}%
Note that each $H_{U}$ can be expanded as 
\begin{equation*}
H_{U}=\sum_{r}h_{r}\lambda _{r}
\end{equation*}%
where $h_{r}$ are real and $\lambda _{r}$ are the generators of $SU(N)$.
Then consider the Hamiltonian defined by the set $\left\{ H_{U}\right\} $
with the probability distribution equal to the uniform distribution in Haar
measure over $SU(N)$. \ For convenience, choose 
\begin{equation*}
\int_{SU(N)}dU=1
\end{equation*}%
where the integral is taken with respect to the Haar measure. \ Time
evolution is then given by 
\begin{equation}
\rho (t)=\int_{SU(N)}e^{-iH_{U}t}\rho (0)e^{iH_{U}t}dU.
\label{depolarization}
\end{equation}

We need to show that this time evolution is purely depolarizing. Consider an arbitrary unitary transformation $V$ such that 
\begin{equation*}
V\rho _{0}V^{\dagger }=\rho _{0}.
\end{equation*}%
Then we can write that
\begin{equation*}
\rho (t)=\int_{SU(N)}e^{-iH_{U}t}V\rho _{0}V^{\dagger }e^{iH_{U}t}dU
\end{equation*}%
from which we can easily obtain 
\begin{equation*}
V^{\dagger }\rho (t)V=\int_{SU(N)}V^{\dagger }e^{-iH_{U}t}V\rho
_{0}V^{\dagger }e^{iH_{U}t}VdU.
\end{equation*}%
Since the above integral runs over the entire unitary group $SU(N)$,
conjugation by $V$ simply amounts to a reparameterization of the integral. \
Thus, we can write 
\begin{equation}
V^{\dagger }\rho (t)V=\int_{SU(N)}e^{-iH_{U}t}\rho _{0}e^{iH_{U}t}dU=\rho
(t).
\end{equation}%
Hence, any rotation leaving the initial polarization vector fixed must also
leave $\rho (t)$ fixed, and the polarization vector of $\rho (t)$ must be
parallel to that of $\rho _{0}$.

Now, consider the case $t=1$. \ From our initial definition of $H_{U}$, we
see that 
\begin{equation*}
\rho (t=1)=\int_{SU(N)}U\rho _{0}U^{\dagger }dU.
\end{equation*}%
Further consider an arbitrary unitary operator $W$. Then conjugation gives 
\begin{align*}
W\rho (t& =1)W^{\dagger }=\int_{SU(N)}WU\rho _{0}U^{\dagger }W^{\dagger }dU
\\
& =\int_{SU(N)}U\rho _{0}U^{\dagger }dU
\end{align*}%
since if $U$ runs over the entire unitary group than so does $WU$. \ Thus,
we see that 
\begin{equation*}
W\rho (t=1)W^{\dagger }=\rho (t=1).
\end{equation*}%
Since this holds for arbitrary unitary operator $W$, we must have that $\rho
(1)$ is proportional to the identity. \ Thus, $\rho (1)$ is the totally
mixed state and we see that the model presented is indeed a depolarizing
channel. \ 

For the single qubit ($N=1$) case, the integral can be evaluated explicitly.
\ Without loss of generality, assume that the initial qubit state is
specified by 
\begin{equation*}
\rho (t=0)=\frac{1}{2}\left( I+\sigma _{z}\right) .
\end{equation*}%
Then 
\begin{equation*}
\rho (t)=\frac{1}{2}\left( I+n_{z}(t)\sigma _{z}\right) .
\end{equation*}%
The integration in Eq. \ref{depolarization} reduces to an integral over the $%
3$-sphere $S^{3}$. \ It can be evaluated explicitly and one finds 
\begin{equation}
n_{z}(t)=\frac{1}{3}+\frac{\sin (2\pi t)}{3\pi (t-t^{3})}.
\end{equation}%
This expression indeed satisfies $n_{z}(t=0)=1$ and $n_{z}(t=1)=0$. \
Somewhat surprisingly, this function has a root near $t=0.77$ and has a
limiting value of $1/3$ as $t\rightarrow \infty $. \ Because of the
unexpected root, proper depolarization behavior ends before $t=1$, however a
simple reparameterization of time and the corresponding redefinition of
noise terms $H_{\alpha }$ allows for arbitrary depolarizing behavior. \ The
graph of $n_{z}(t)$ is shown in Fig. \ref{depo}. 
\begin{figure}[h]
\includegraphics[width=8cm]{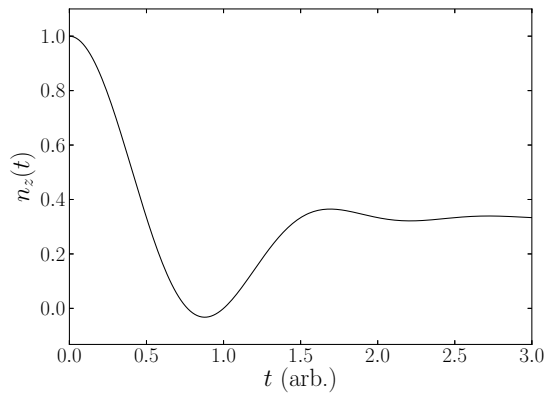}
\caption{The $z$ component of the Bloch vector as a function of time for
single qubit classical depolarization. \ Note that proper depolarization
ends at the first root ($t\approx 0.77$). }
\label{depo}
\end{figure}

In this classical model the noise histories belong to an uncountably
infinite set. \ It is also possible to construct a classical depolarization
model using a finite set of histories. \ The Clifford group $C\left(
N\right) $ is a subgroup of $SU\left( N\right) $ defined as the stabilizer
group of the Pauli group; the latter consists of tensor products of the
Pauli matrix multiplied by the units of the complex numbers. \ It has been
shown that $C\left( N\right) $ is a unitary 2-design, meaning that averages
over the Clifford group of polynomial expressions of degree less than or
equal to two (such as $\rho _{0})$ are equal to uniform averages over the
Haar measure \cite{clifford, eisert, seymour}. \ This means that the above
derivation of the depolarizing evolution may be repeated with the
substitution 
\begin{equation*}
\int_{SU(N)}f\left( U\right) dU\rightarrow \frac{1}{\left\vert C\left(
N\right) \right\vert }\sum_{C}f\left( C\right) .
\end{equation*}%
Since the members of $C\left( N\right) $ are unitary, evolution of $\rho $
according to 
\begin{equation*}
\frac{1}{\left\vert C\left( N\right) \right\vert }\sum_{U\in C}C\rho C^{\dag
}
\end{equation*}%
is a RC evolution. \ The construction of the equivalent depolarizing quantum
system-bath model then proceeds as in Eqs. \ref{eq:1}-\ref{eq:3}.

\section{Conclusion}
\label{sec: conclusion}

We have shown how to construct classical simulations of decoherence arising
from the interactions of a single qubit and an external bath for the pure
dephasing case. \ We demonstrated that the noise functional takes on the
particularly simple form of two uniformly distributed random unitary
evolutions and gave the explicit expression for the corresponding
Hamiltonian fields. \ We offered exact results of the calculation of these
fields for the spin-boson model, approximate results for the central spin
model, and numerically exact results for the quantum impurity model. \ 

Classical simulation of quantum noise acting on multiple qubits and qudits
is also possible in certain cases. For two interacting qubits, we showed
that the simulation is possible if the interaction is sufficiently simple
and we gave the corresponding classical noise model. \ We demonstrated that
the depolarization channel can be simulated classically for a state space of
arbitrary dimension, using well-known properties of the Clifford group. \ 

Philosophically, it is surprising that classical simulation is mainly
possible when the decoherence arises from phase randomness of quantum states
and is more difficult when the decoherence comes from randomness in the
population of those states, since phase is a characteristic of quantum
mechanics that is not shared with classical mechanics, while the concept of
population is common to both. \ 

\begin{acknowledgments}
We acknowledge with pleasure illuminating discussions with W. Strunz and
B.L. Hu. \ This project was funded by DARPA-QuEst Grant No. MSN118850.
\end{acknowledgments}

\bibliographystyle{apsrev4-1}
\bibliography{bibliography}

\end{document}